\definecolor{ao(english)}{rgb}{0.0, 0.5, 0.0}
\newtheorem{definition}{Definition}
\numberwithin{definition}{section}
\newtheorem{theorem}[definition]{Theorem}
\newtheorem{proposition}[definition]{Proposition}
\newtheorem{lemma}[definition]{Lemma}
\newtheorem{example}[definition]{Example}
\def\K{\ensuremath{\mathbb{K}}}
\def\Kbar {\ensuremath{\overline{\mathbb{K}}}}
\DeclareBoldMathCommand{\c}{c}
\DeclareBoldMathCommand{\f}{f}
\DeclareBoldMathCommand{\g}{g}
\DeclareBoldMathCommand{\h}{h}
\DeclareBoldMathCommand{\x}{x}
\DeclareBoldMathCommand{\z}{z}
\DeclareBoldMathCommand{\v}{v}
\DeclareBoldMathCommand{\u}{u}
\DeclareBoldMathCommand{\e}{e}
\def\ZZ {\ensuremath{\mathbb{Z}}}
\def\bmu{\mbox{\boldmath$\mu$}}
\def\b_eta{\mbox{\boldmath$\eta$}}
\def\softO{\ensuremath{{O}{\,\tilde{ }\,}}}
\DeclareBoldMathCommand{\balpha}{\alpha}
\def\jac{\ensuremath{{\rm Jac}}}
\def\diag{\ensuremath{\mathrm{diag}}}
\title{On the complexity of invariant polynomials under the action of
  finite reflection groups}
\author{Thi Xuan Vu\thanks{Department of Mathematics and Statistics,
    UiT, The Arctic University of Norway, Troms\o{}, Norway email:{thi.x.vu@uit.no}}
}
\date{}
\begin{document}

\maketitle 

\begin{abstract}
   Let  $\K[x_1, \dots, x_n]$ be a  multivariate 
 polynomial ring over a field $\mathbb{K}$. Let $(u_1, \dots,
 u_n)$ be a sequence of $n$  algebraically independent elements in
 $\K[x_1, \dots, x_n]$.  Given a polynomial $f$ in
 $\K[u_1, \dots, u_n]$,  a subring of $\K[x_1,
 \dots, x_n]$ generated by the $u_i$'s, we are interested in 
 finding the unique polynomial $f_{\rm  new}$ in $\mathbb{K}[e_1,
 \dots, e_n]$, where $e_1, \dots, e_n$ are new variables, such that  
  $f_{\mathrm{new}}(u_1, \dots, u_n) = f(x_1, \dots, x_n)$. We
  provide an algorithm and analyze its arithmetic complexity to
  compute  $f_{\mathrm{new}}$ knowing $f$ and  $(u_1, \dots, u_n)$.  


\end{abstract}


\section{Introduction}
 Let $\K[x_1, \dots, x_n]$ and $\K[e_1, \dots, e_n]$ be multivariate
 polynomial rings over a field $\K$. Given a polynomial $f\in \K[u_1,
 \dots, u_n]$, where $u_1, \dots, u_n$ are algebraically independent
 in $\K[x_1, \dots, x_n]$, in this paper we  consider the
 problem of finding the  polynomial $f_{\rm  new} \in \K[e_1, \dots,
 e_n]$ such that $$f_{\mathrm{new}}(u_1, \dots, u_n) = f(x_1, \dots,
 x_n).$$  The problem of finding such polynomial appears in many
 application areas, especially in solving polynomial systems invariant
 under the action of finite reflection groups. 
  Without loss of generality, for  $1 \le i \le k \le n$,
we can suppose that $\deg(u_i) \le \deg(u_k)$. We also 
can assume that $\deg(u_i) \le d$ for all $1 \le i \le n$; otherwise,
the polynomial $f_{\rm new}$ is in $\K[e_1, \dots, e_{k-1}]$ for  $k
\in \{1, \dots, n\}$ such that  $\deg(u_k) > d$. 
 
 \paragraph{Motivation.} Given a field $\K$ and an action of a finite
 group $G$ on a $\K$-vector space $V$, we obtain a linear action of
 $G$ on $\K[V]$, the ring of  polynomial functions of $V$ with
 coefficients in $\K$. A polynomial $f$ in $V$ is said to be
 $G$-{invariant} if $\sigma(f) = f$ for all $\sigma \in G$. We denote
 by $\K[V]^G$ the set of all $G$-invariant polynomials in $\K[V]$. The 
  Chevalley-Shephard-Todd Theorem \cite[Chapter~6]{benson1993} states
  that if the order of $G$ is not divisible by the characteristic of
  $\K$, then $\K[V]^G$  is polynomial if and only if $G$ is generated
  by pseudoreflections. 

 In other words, picking a dual basis $(x_1,
  \dots, x_n)$ for $V^*$, there exist $n$ homogeneous polynomials
  $(u_1, \dots, u_n)$ in $\K[x_1, \dots, x_n]$ such that for any
  polynomial $f$ in $\K[x_1, \dots, x_n]^G$, there exists a unique
  polynomial $f_{\rm new} \in \K[e_1, \dots, e_n]$, where $(e_1,
  \dots, e_n)$ are new variables, such that  
 $
  f_{\rm new}(u_1, \dots, u_n) = f(x_1, \dots, x_n). 
 $ Note that for a fixed finite reflection group $G$, sets of
 generators for $\K[x_1,\dots, x_n]^G$ are not unique. For instance,
 when $G$ is the symmetric group ${S}_n$, the invariant ring $\K[x_1,
 \dots, x_n]^{{S}_n}$ is generated by elementary symmetric polynomials
 in $(x_1, \dots, x_n)$. Alternatively, the ring $\K[x_1, \dots,
 x_n]^{{S}_n}$ can also be generated by power sum symmetric
 polynomials.    

 Computationally, the question of finding the unique representation
 $f_{\rm new}$ of $f$ is important, particularly in polynomial system
 solving with invariant polynomials (see
 e.g.~\cite{FLSSV2021, Vu2022}). Finding such a representation
 allows us to represent a $G$-orbit of a 
 fixed point by a single point. Here the orbit of a point $a \in
 \Kbar{}^n$ is the set 
 $\{g(a) \,:\, {\rm  for~all} \, g \in G\}$ and it sometimes is called 
 the $G$-orbit of $a$, where  $\Kbar$ is an  algebraic closure of
 $\K$. Consider an algebraic
 set $W \subset \Kbar{}^n$  which is invariant under the action of
 $G$. We can define the relative orbit $W/G$ whose points are the
 $G$-orbits of points in the  set $W$.

 \begin{example}Consider a sequence of polynomials $(f, h, \ell) =
   (x_1x_2x_3+2, x_1^2+x_2^2+x_3^2-6, x_1+x_2+x_3-2)$ in 
 $\K[x_1, x_2, x_3]^{{S}_3}$. In this case we can 
 see that the zero set $W := V(f, h, \ell)$ of $(f, h, \ell)$ consists of 6
 points. However, we can easily determine it computationally using
 the  representations \[f_{\rm new} = 
 \frac{1}{6}e_1^3-\frac{1}{2}e_1e_2+\frac{1}{3}e_3+2, h_{\rm new} =
 e_2-6, \,  {\rm and} \, \ell_{\rm new} = e_1-2\] of $f$, $h$, and
 $\ell$, respectively,  with a set of generators for  $\K[x_1,  x_2,
 x_3]^{{S}_3}$ being $$(u_1, u_2, u_3) = (x_1+x_2+x_3,
 x_1^2+x_2^2+x_3^2, x_1^3+x_2^3+x_3^3).$$ Here $(e_1, e_2, e_3)$ stand
 for new variables. 

 The solution
 set of $(f_{\rm new}, h_{\rm new}, \ell_{\rm new})$ consists of only
 one point $(2,6,8)$. That is the relative  orbit $W/{S}_3$ contains a
 single point, implying $W$ has only a single orbit consisting of six
 permutations of $(-1,1,2)$. The point $(2,6,8)$ represents the
 ${S}_3$-orbit of $(-1,1,2)$. This result matches  what we get if we
 compute directly the orbit of any element in $W$.  
   \end{example}

    We also refer to \cite{gaudry2004construction} for
another example of resolution of a system of equations symmetric under
permutation of variables and to
 \cite{Colin97, gatermann1996semi} and references therein for more
 general questions (solving some systems invariant by action of some
 other groups).

 \paragraph{Related works.}  
A naive procedure to find $f_{\rm new}$ is to use
evaluation-interpolation method at enough points. Since the degree of
$f$ is $d$, then the numbers of momonials in $f$ and $f_{\rm new}$ are
at most $\rho := \binom{n+d}{n}$. Let $a_1, \dots, a_\rho$
be distinct points in $\K^n$. Given $(a_i, f_{\rm new}(a_i))_{1 \le i
  \le \rho}$, one can uniquely determine $f_{\rm new}$. To do it, we
let $b_1, \dots, b_\rho$ be  distinct points in $\K^n$ then we define
$a_i = u_i(b_i)$ for all $i=1, \dots, \rho$. Then $f_{\rm new}(a_i) =
f(b_i)$  for $i=1, \dots, \rho$. Interpolating at points  $(a_i,
f_{\rm new}(a_i))_{1 \le i  \le \rho}$ gives us the polynomial $f_{\rm
new}$. Note that although $b_i$'s are distinct, it is possible to have
some duplicates among $a_i$'s. For simplicity, we suppose that $(a_1,
\dots, a_\rho)$ are distinct. As mentioned above, we can assume that
$\deg(u_k) \le d$ for all $k=1, \dots, n$, then evaluating $u_k$ at
$b_i$ takes $O(\rho)$ operations in $\K$. So the total cost of
$O(n \rho^2)$ operations in $\K$ to find $(a_1, \dots,
a_\rho)$. Similarly, evaluating $f$ at $(b_1, \dots, b_\rho)$ requires
$O(\rho^2)$ operations in $\K$; then the same cost is needed to find
$(f_{\rm new}(a_i))_{1 \le i \le \rho}$. Finally, interpolating at
points  $(a_i, f_{\rm new}(a_i))_{1 \le i  \le \rho}$ can be done by
constructing a system of linear equations. Invertibility  of  the
Vandermonde matrix in $\K^{\rho \times  \rho}$ implies  that the
number of operations for this interpolation is $O(\rho^\omega)$. Here
$2 < \omega \le 3$ is the exponent of multiplying two square matrices
with coefficients in $\K$.  Thus the total  cost of $O(n \rho^\omega)$
is required to find $f_{\rm  new}$ by using the
evaluation-interpolation method.     

On the other hand, it is well known that one can use
{G}r{\"o}bner bases to obtain $f_{\rm new}$ from $f$ and $(u_1,
\dots, u_n)$, however, there is no complexity analysis for this
process. Precisely, in the polynomial ring $\K[x_1, \dots, x_n, e_1,
\dots, e_n]$, we fix a monomial order $\succ$ where any monomial
involving one of $(x_1, \dots, x_n)$ is greater than all monomials in
$\K[e_1, \dots, e_n]$. Let $B$ be a  {G}r{\"o}bner basis with respect
to the order $\succ$ of the ideal $\langle u_1-e_1, \dots,
u_n-e_n \rangle \subset \K[x_1, \dots, x_n, e_1, \dots, e_n]$. Let
$f$ be a polynomial in $\K[u_1, \dots, u_n]$. Then the
polynomial $f_{\rm new}$ can be obtained as the remainder of $f$ on
division by $B$. We refer the reader to \cite[Proposition~4 -
Section~1 - Chapter~7]{CLO07} for a detailed description of this
procedure.  

 Later in this paper, the main idea of our algorithm is to use  Hensel
 liftings.  This idea comes from the work given by  Bl\"aser and
 Jindal in~\cite{BlaserJindal18} when $G$  is the symmetric group
 $S_n$ and $u_1, \dots, u_n$ are elementary symmetric polynomials in
 $(x_1, \dots, x_n)$. The main idea of their algorithm is to use the
 fact that $x_i$ can be written as a function  of $u_i$ by using a
 polynomial $$q(t) = t^n - u_1t^{n-1} + \cdots + (-1)^nu_n \in \K[x_1,
 \dots, x_n][t],$$ where $t$ is a new variable. For example, consider
 $n=2$ and $(u_1, u_2) = (x_1+x_2, x_1 x_2)$. Then $x_1$ and $x_2$ are
 the roots of polynomial $p(t) = p^2 - (x_1+x_2)t + x_1x_2 =  t^2  -
 u_1 t + u_2$, and so $$x_1 =  \frac{u_1+\sqrt{u_1^2 - 
      4u_2}}{2} \, {\rm and } \, x_2 =    \frac{u_1-\sqrt{u_1^2 -
      4u_2}}{2}. $$ If we substitute these  functions to $f$ we
  obtain $f_{\rm new}$, however, these functions are  neither
  polynomials nor power series. In order to deal with this situation,
  they substitute $u_n$ by $u_n+(-1)^{n-1}$ (and replace
  $u_i$ by $e_i$ as we want to find the unique expression $f_{\rm 
    new} \in \K[e_1, \dots, e_n]$ of $f$) to obtain
  polynomial 
  \begin{equation}
      \label{eq:vite} q(\e, t) = t^n - e_1t^{n-1} + \cdots +
  (-1)^n(e_n+(-1)^{n-1}) 
  \end{equation} in  $\K[e_1, \dots, e_n][t]$,
   and then compute degree $d$ truncation of roots of $q(t)$, with
   respect to $t$, by using Newton's iteration. The substitution
   $u_n$ by $u_n+(-1)^{n-1}$ makes sure that the polynomial
   $q(0, \dots, 0,t)$ has no multiple root in order to perform
   Newton's iteration. Then together with the 
   fact that $q(e_1, \dots, e_n, t)$ is square-free with respect to
   $t$, one can conclude that the roots of $q(e_1, \dots, e_n, t)$ can
   be expanded into power series in $(e_1, \dots, e_n)$ (see
   e.g.~\cite[Condition~A]{sasaki1999solving}).
   
      Note that we can chose a polynomial $q  \in \K[\e, t]$ as 
\[
 q(\e, t) = t^n -
  (u_1(1, \dots, n)+e_1)t^{n-1} + \cdots +  (-1)^n (u_n(1,
  \dots,n)+e_n)
\]
   since $q(0, \dots, 0, t)$ has $n$
  distinct roots $1, \dots, n$ and $q(\e, t)$ is square-free with
  respect to $t$. We also remark that the algorithm given by Bl\"aser
  and Jindal in~\cite{BlaserJindal18} only works for the symmetric
  group ${S}_n$ and $(u_1, \dots, u_n)$ are elementary symmetric
  polynomials in $x_1, \dots, x_n$. This procedure can not be
  generalized for any finite reflection group or any set of generators
  since we can not always find a polynomial $q(\e, t)$ as in
  \eqref{eq:vite}; the equation~\eqref{eq:vite} is obtained thanks to
  Vieta's formulas.   

   Recently, a slight generalization of Bl\"aser  and Jindal's algorithm 
\cite{BlaserJindal18} 
and its complexity are described in \cite[Lemma~2.2]{FLSSV2021} for 
directed products of symmetric groups ${S} = {
  S}_{\ell_1}\times \cdots \times {S}_{\ell_r}$, where $\ell_k$'s
are positive integers, and the generators are also elementary
symmetric polynomials. To be specific, for $1 \le k \le r$, we denote by
$\x_k = (x_{k,1}, \dots, x_{k, \ell_k})$ a set of $\ell_k$ variables
and $\u_k = (u_{k, 1}, \dots, u_{k, \ell_k})$ the elementary
symmetric polynomials in $\x_k$. For any polynomial $f$ in $\K[\x_1,
\dots, \x_r]^{S}$, we want to find $f_{\rm new} \in \K[\e_1,
\dots, \e_r]$, where $\e_k = (e_{k, 1}, \dots, e_{k, \ell_k})$ are new
variables, such that $f_{\rm new}(\u_1, \dots, \u_r) =
f$. For $1 \le k \le r$, we define 
\[
 q_k(\e, t) = t^n - (u_1(1,
\dots, \ell_k)+e_{k,1})t^{n-1} + \cdots + 
 (-1)^n (u_{\ell_k}(1,
\dots,\ell_k)+e_{k, \ell_k}) \in \K[\u_k, t]
\]
 and then compute degree
$d$ truncation of roots of $q_k(\e, t)$, with respect to $t$, by using
Newton's iteration. 

Finally, there are also some works which study the relation between the
  sizes $L(f)$ and $L(f_{\rm new})$ of the smallest circuits computing
  $f$ and $f_{\rm new}$ respectively. In
  \cite[Theorem~1]{gaudry2006evaluation}, the authors show that $L(f)
  \leq \Delta(n) L(f_{\rm new})+2$, where $\Delta(n) \le 4^n(n!)^2$,
  when the group $G$ is the symmetric group. Later on, the results in
  \cite[Theorem~3]{dahan2009evaluation} are in a more general setting
  when the polynomials are invariant under the action of general
  finite matrix groups. While the runtime of the algorithm in
  \cite{dahan2009evaluation} depends on the order of the group, the
  runtime of our algorithm in this paper does not depend on the order
  of the group. However, the result in  \cite{dahan2009evaluation} is
  for any  finite groups, ours here is for finite reflection groups. 
    
    We will consider the problem of computing $f_{\rm new}$ when $G$
    is generally a  finite group as one of future works. Note that,
    when $G$ is a finite group, any polynomial $f \in \K[x_1, \dots,
    x_n]^G$, can be uniquely written as  
  \[
  f = \sum_i \theta_ih_i(e_1, \dots, e_n)
  \] for some $h_i$ in $\K[y_1, \dots, y_n]$, where $(e_1, \dots, 
  e_n)$ and $(\theta_1, \dots,  \theta_N)$ are respectively primary
  and minimal secondary invariants.

\paragraph{Our result.} In what follows, we use $\softO(\cdot)$ to
indicate that polylogarithmic factors are omitted, that is, $a$ is in
$\softO(b)$ if there exists a constant $k$ such that $a$ is $O(b \,
\log^k(b))$ (see~\cite[Section 7 - Chapter 25]{Gat03} for technical
details). The smallest integer larger or equal to $a$ is written as
$\lceil a \rceil$.

          For a positive integers $D$ and $m$,  ${\mathcal{M}}_b(D)$
        denotes the cost of the 
        multiplication of univariate polynomials of degree $D$ in
        terms of operations in the base ring $\K$ and ${\mathcal{M}}(D, m)$
        denotes the cost of $m$-variate series multiplication at
        precision $D$. The quality ${\mathcal{M}}_b(D)$ can be taken
        in $\softO(D)$ by using the  algorithms of Sch{\"o}nhage and
        Strassen \cite{schonhage1971schnelle} and Sch{\"o}nhage 
        \cite{schonhage1977schnelle} and ${\mathcal{M}}(D, m)$ can be taken
        less than ${\mathcal{M}}_b((2D+1)^m)$  by using  Kronecker's
        substitution (see \cite{Kronecker82} and \cite{Gat03}). If the
        field $\K$ is of characteristic zero,  ${\mathcal{M}}(D, m)$ is
        $\softO({\mathcal{M}}_b(\binom{D+m}{m}))$, that is linear in the
        size of the series, up to logarithmic factors (see
        \cite{lecerf2003fast}).

We suppose that the sequence of polynomials $(u_1, \dots, u_n)$ is given
by a straight-line program, that is, a sequence of elementary
operations $+, -, \times,$ to represent $(u_1, \dots, u_n)$. 

\begin{theorem} Let $\K[x_1, \dots, x_n]$ be a multivariate polynomial
  ring over a field $\K$. Let $(u_1, \dots, u_n)$ be algebraically
  independent elements in $\K[x_1, \dots, x_n]$. Then there exists an
  algorithm called {\sf Convert\_Polynomial} 
  which, takes as input $(u_1, \dots, u_n)$ and a polynomial $f$ in
  $\K[u_1, \dots,  u_n]$, and outputs the polynomial $f_{\rm new} \in
  \K[e_1, \dots, e_n]$ such that $f_{\rm  new}(u_1, \dots, u_n) =  
  f(x_1, \dots, x_n)$ using $$\softO\big((nL  + n^4){\mathcal{M}}(d,
  n) + \binom{n+d}{n}^2\big)$$ operations in $\K$, where $d$ is  the 
  degree of $f$ and $L$ is the length of a straight-line program 
  representing $(u_1, \dots, u_n)$.  
\end{theorem}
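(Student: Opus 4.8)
The plan is to generalize the Hensel/Newton strategy of Bl\"aser--Jindal: instead of recovering $x_1,\dots,x_n$ as roots of a univariate polynomial as in~\eqref{eq:vite} --- available there only because the $u_i$ are elementary symmetric functions, via Vieta's formulas --- we recover them as the components of the \emph{local inverse} of the substitution map $\x\mapsto(u_1(\x),\dots,u_n(\x))$, computed by a multivariate Newton iteration, and then substitute that inverse into $f$. First some preliminaries: since $u_1,\dots,u_n$ are algebraically independent, the morphism $\K[e_1,\dots,e_n]\to\K[x_1,\dots,x_n]$, $e_i\mapsto u_i$, is injective, so $f_{\rm new}$ is unique, and by the reductions in the introduction we may assume $\deg(u_i)\le d$ for all $i$, whence $f_{\rm new}$ has degree at most $d$. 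Algebraic independence also makes the Jacobian determinant $\det J_u$, where $J_u:=\big(\partial u_j/\partial x_i\big)_{i,j}$, a nonzero polynomial (in characteristic zero, and in the reflection-group setting motivating the paper). Hence there is a point $a=(a_1,\dots,a_n)$ at which $J_u(a)$ is invertible over $\K$ --- for the symmetric group with elementary symmetric generators one may take $a=(1,2,\dots,n)$, exactly the shift underlying~\eqref{eq:vite} --- and if $\K$ is too small to contain one we pass to a field extension of degree $O(\log\deg(\det J_u))$, costing only an $\softO(\cdot)$ factor. Put $b_j:=u_j(a)$. By the formal implicit function theorem applied to the system $u_j(y_1,\dots,y_n)=e_j$ at $y=a$, $e=b$ (where its Jacobian is the invertible matrix $J_u(a)$), there exist unique power series $g_1,\dots,g_n\in\K[[e_1-b_1,\dots,e_n-b_n]]$ with $g_i(b)=a_i$ and $u_j(g_1,\dots,g_n)=e_j$ for all $j$. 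Substituting $x_i\mapsto g_i$ into the identity $f_{\rm new}(u_1(\x),\dots,u_n(\x))=f(\x)$ gives $f_{\rm new}(e_1,\dots,e_n)=f(g_1,\dots,g_n)$ as power series in the $e_i-b_i$; since $\deg f_{\rm new}\le d$, truncating the right-hand side at total degree $d$ recovers $f_{\rm new}$ in these shifted coordinates, and a final Taylor shift $e_i-b_i\mapsto e_i$ produces $f_{\rm new}$.

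Accordingly, the algorithm {\sf Convert\_Polynomial} has four steps. (i) From the length-$L$ straight-line program for $(u_1,\dots,u_n)$, apply the Baur--Strassen theorem once per output to build a straight-line program of length $O(nL)$ evaluating the $u_j$ together with all partials $\partial u_j/\partial x_i$; use it to find a base point $a$ with $J_u(a)$ invertible (after a few trials, or over a small field extension) and to compute $b=u(a)$. (ii) Run Newton's iteration for $u(y)=e$ from $y^{(0)}=a$: at step $k$, holding $y^{(k)}$ with $u(y^{(k)})\equiv e$ modulo terms of degree $\ge 2^{k}$ in the $e_i-b_i$, evaluate the straight-line programs for $u$ and $J_u$ at $y^{(k)}$ over the power-series ring truncated modulo degree $\ge 2^{k+1}$, solve the $n\times n$ linear system $J_u(y^{(k)})\,\delta=u(y^{(k)})-e$ there, and set $y^{(k+1)}=y^{(k)}-\delta$; after $\lceil\log_2(d+1)\rceil$ steps this yields $g_1,\dots,g_n$ to total degree $d$. (iii) Evaluate $f$ at $(g_1,\dots,g_n)$ to total degree $d$ by a Horner-type scheme --- equivalently, precompute the $\binom{n+d}{n}$ truncated power products $g_1^{\alpha_1}\cdots g_n^{\alpha_n}$ with $|\alpha|\le d$ and take the $\K$-linear combination given by the coefficients of $f$. (iv) Apply the shift $e_i-b_i\mapsto e_i$ and return $f_{\rm new}$. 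Correctness follows from the identity $f_{\rm new}=f(g_1,\dots,g_n)$ truncated at degree $d$ and the quadratic convergence of Newton's iteration.

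For the complexity, one arithmetic step of a straight-line program over the ring truncated at precision $D$ costs $O(\mathcal{M}(D,n))$, so evaluating the programs for $u$ and $J_u$ at $y^{(k)}$ costs $O(nL\cdot\mathcal{M}(2^{k},n))$; the linear algebra over that ring at step $k$ --- inverting or solving with $J_u(y^{(k)})$ and the matrix--vector products --- costs $O(n^{4})$ ring operations (for instance by a division-free inversion in $O(n^{\omega+1})$ ring multiplications; since $J_u(y^{(k)})$ reduces modulo the maximal ideal to the invertible $J_u(a)$, one could instead use $O(n^{3})$ ring multiplications and $O(n)$ truncated-series inversions, but $O(n^{4})$ suffices), i.e.\ $O(n^{4}\cdot\mathcal{M}(2^{k},n))$. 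Summing over the $O(\log d)$ Newton steps, steps (i)--(ii) cost $\softO\big((nL+n^{4})\,\mathcal{M}(d,n)\big)$. Step (iii) uses $O(\binom{n+d}{n})$ multiplications of $n$-variate series at precision $d$, hence $\softO\big(\binom{n+d}{n}\,\mathcal{M}(d,n)\big)$, which over a field of characteristic zero equals $\softO\big(\binom{n+d}{n}^{2}\big)$ since there $\mathcal{M}(d,n)=\softO\big(\binom{n+d}{n}\big)$; step (iv) and the input/output of the dense degree-$\le d$ polynomials $f$ and $f_{\rm new}$ are absorbed into this bound. Adding the two contributions gives the stated $\softO\big((nL+n^{4})\,\mathcal{M}(d,n)+\binom{n+d}{n}^{2}\big)$.

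The main obstacle is making the Newton iteration both legitimate and efficient: one must establish the base point (existence of $a$ with $J_u(a)$ invertible --- the reason the iteration is started at a shifted point rather than at the origin, since the $u_i$ typically have vanishing linear part so $J_u(0)$ is singular) and verify that the matrices $J_u(y^{(k)})$ over the truncated rings stay invertible throughout; and one must carry out the per-step bookkeeping precisely enough to isolate the $O(nL+n^{4})$ ring-operation count while ensuring the growing precisions $2^{k}$ contribute, through $\mathcal{M}$, only $\softO(\mathcal{M}(d,n))$ overall. Handling small base fields (which costs only an $\softO(\cdot)$ factor via a field extension) and the positive-characteristic case, where the Jacobian criterion is more delicate, are further points that require attention.
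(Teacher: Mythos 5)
Your proposal follows essentially the same route as the paper: shift to a base point $a$ where the Jacobian $J_u(a)$ is invertible, lift to a vector of power series that inverts $u(\cdot)=e$ by Newton iteration, substitute into $f$, truncate at degree $d$, and undo the shift, with the complexity split into $\softO\big((nL+n^4)\mathcal{M}(d,n)\big)$ for the Newton/Baur--Strassen/Leverrier phase and $O\big(\binom{n+d}{n}^2\big)$ for substitution and Taylor shift. The only stylistic divergence is in establishing the good base point: you argue directly from nonvanishing of $\det J_u$ (algebraic independence plus the Jacobian criterion in characteristic zero), whereas the paper (Lemma 3.5) uses Thom's weak transversality to show a random $\balpha$ works after replacing $u_i$ by $u_i - u_i(\balpha)$; both are genericity arguments with the same effect, and your observation that the $\binom{n+d}{n}^2$ term implicitly uses the characteristic-zero bound $\mathcal{M}(d,n)=\softO(\binom{n+d}{n})$ is a fair refinement of a point the paper glosses over.
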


Note that the straight-line program coding for the input of our
algorithm is not restrictive since the notion of straight-line
program encoding covers dense encoding notion.  Precisely, if $h$ is a  
polynomial of degree $d$ in $\K[x_1, \dots, x_n]$, then the length of
a  straight-line program representing $h$ is
$O\big(\binom{n+d}{n}\big)$. This can be  seen as follows: the number
of monomials of degrees at most $d$ in $\K[x_1, \dots, x_n]$ is
$\binom{n+d}{n}$, taking the multiplication of all monomials of $h$
with their coefficients and adding them  up requires $2\binom{n+d}{n}$
operations.  

 As mentioned above, while there is no complexity analysis for the
 process using {G}r{\"o}bner bases, when the field $\K$ is large
 enough,  e.g. of characteristic zero, the number of required
 operations is  $O(n \binom{n+d}{n}^w)$ for the procedure using
 evaluation-interpolation method. Therefore, our algorithm in this
 paper works for a general problem and is faster compared to
 previous ones. 
  In addition our algorithm has been implemented in the Maple computer
  algebra and has been tested  when $\K$ is the field of rational
  numbers and $G$ are symmetric groups, hyperoctahedral groups, and
  the symmetric group $D_3$ of regular 3-gon in $(x_1, x_2)$-plane. 

\paragraph{Organization.} The structure of the paper is as follows. In
Section~\ref{sec:lifting}, we provide a detailed description of a
lifting procedure and its complexity to compute an approximation of a
vector of multivariate power series. Our main algorithm and
its cost are given in Section \ref{sec:main}. 


\section{Newton-Hensel lifting}
\label{sec:lifting}
  Lifting techiques are classical methods which can be found, for
  examples, as in~\cite{heintz2000deformation, giusti2001grobner,
    schost2003computing} (see also references therein). 
   In this section, the notations $\x$ and $\e$ stand respectively for
   the sets of variables $(x_1, \dots, x_n)$ and $(e_1, \dots, e_m)$
   for some positive integers $n$ and $m$
   and $\langle \e \rangle$ is an ideal in the polynomial ring 
   $\K[\e]$ generated by $e_1, \dots, e_m$. For a positive integer
   $d$, let us denote by $$\langle \e\rangle^d := \langle e_1^{t_1}
   \cdots e_m^{t_m} \, : \, t_1+ \cdots + t_m \ge d+1\rangle  \subset
   \K[\e] $$ an ideal in $\K[\e]$ generated by  all monomials of
   degrees at least $d+1$. If $p$ is a polynomial in $\K[\e]$ or a
   power series in $\K[[\e]]$, $p$ mod $\langle \e \rangle^d$ equals
   part of $p$ up to degree $d$.

   \begin{proposition} \label{prop:newton}
   Let $\h =(h_1, \dots, h_n)$ be a sequence of polynomials in $\K[\x,
   \e]$ and $\balpha = (\alpha_1, \dots, \alpha_n)$ in $\Kbar{}^n$
   be a solution of $\h(\x, 0, \dots, 0)$. Assume  that the
   Jacobian matrix of $\h(\x, 0, \dots, 0)$ with respect to $\x$ is
   full rank at $\balpha$. Then  there exists a unique vector of power
   series $\v = (v_1, \dots, v_n)$ in $\Kbar[[e_1, \dots, e_m]]$ such
   that      
   \begin{equation}\label{eq:satis}
   \v(0, \dots, 0) = \balpha \quad {\rm and} \quad h_1(\v, \e) =
   \cdots =  h_n(\v, \e) = 0. 
   \end{equation}
   Furthermore, the vector of power series $\v$ can be approximated
   to arbitrary degree $\delta$ using ${\sf Lifting}$ algorithm, that
   is, 
   \[
   \v^{(\lceil \log_2(\delta)\rceil)} = \v \mod \langle \e
   \rangle^\delta,
   \] where $\v^{(\lceil \log_2(\delta)\rceil)}$ is the output of
   ${\sf Lifting}$ algorithm which takes $\h, \balpha$, and $\delta$
   as the input. The complexity to compute this approximation is
   $$\softO((nL+ n^4){\mathcal{M}}(\delta, m))$$ operations in $\K$, where
   $L$ is length of a straight-line program computing $\h$.
   \end{proposition}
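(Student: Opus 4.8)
The plan is to establish the statement in two parts: first the existence and uniqueness of $\v$, then the algorithmic approximation and its cost.

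\textbf{Existence and uniqueness.} For the existence and uniqueness of the power series vector $\v$ satisfying \eqref{eq:satis}, I would invoke the multivariate analytic/formal implicit function theorem. The hypothesis that the Jacobian $\jac_{\x}\h(\x, 0, \dots, 0)$ is full rank at $\balpha$ means it is an invertible $n \times n$ matrix over $\Kbar$ (evaluated at $\x = \balpha$); since invertibility is an open condition detected by the nonvanishing of a determinant, and $\h(\balpha, \0) = \0$, the formal implicit function theorem over $\Kbar[[\e]]$ yields a unique $\v \in \Kbar[[\e]]^n$ with $\v(\0) = \balpha$ and $\h(\v, \e) = \0$. One can alternatively argue directly: solve for $\v \bmod \langle \e\rangle^k$ by induction on $k$, where at each step the correction term is uniquely determined by a linear system whose matrix is $\jac_{\x}\h(\balpha, \0)$ (up to units in $\Kbar[[\e]]$), hence invertible. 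This inductive argument doubles as the correctness proof for the quadratically convergent iteration below.

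\textbf{The lifting iteration.} The {\sf Lifting} algorithm performs Newton–Hensel iteration: starting from $\v^{(0)} = \balpha$, set
\[
\v^{(i+1)} = \v^{(i)} - \jac_{\x}\h(\v^{(i)}, \e)^{-1} \cdot \h(\v^{(i)}, \e) \bmod \langle \e \rangle^{2^{i+1}}.
\]
The standard quadratic-convergence lemma for Newton iteration (see \cite{giusti2001grobner, schost2003computing}) gives $\v^{(i)} \equiv \v \bmod \langle \e \rangle^{2^i}$, so that after $\lceil \log_2(\delta) \rceil$ steps we obtain $\v \bmod \langle \e\rangle^\delta$ as claimed. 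I would note that the Jacobian inverse only needs to be computed mod $\langle \e \rangle^{2^i}$ at step $i$, and that $\jac_{\x}\h(\v^{(i)}, \e)$ is congruent mod $\langle \e \rangle$ to the invertible matrix $\jac_{\x}\h(\balpha, \0)$, so it is invertible over $\Kbar[[\e]]$ truncated at any precision; its inverse can itself be refined by Newton iteration on matrices, or recomputed at each step.

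\textbf{Complexity analysis.} This is the main technical part. At the final iteration, precision is $\delta$, so each $m$-variate truncated power series operation costs ${\mathcal{M}}(\delta, m)$, and the geometric growth of precisions $2, 4, \dots, \delta$ means the total is dominated (up to the logarithmic factor absorbed in $\softO$) by the last step. The two cost contributions are: (i) \emph{evaluating $\h$ and its Jacobian at $\v^{(i)}$}. Since $\h$ is given by a straight-line program of length $L$, evaluating $\h(\v^{(i)}, \e)$ — substituting the power series $\v^{(i)}$ and the variables $\e$ into the program — costs $O(L)$ power-series operations, i.e.\ $O(L\,{\mathcal{M}}(\delta, m))$; computing the Jacobian $\jac_{\x}\h$ by automatic differentiation (Baur–Strassen) keeps this within $O(nL\,{\mathcal{M}}(\delta,m))$, or at worst $O(n L)$ series operations to get all $n^2$ partial derivatives. (ii) \emph{linear algebra on the $n \times n$ Jacobian over truncated power series}: inverting it (or solving the linear system $\jac_{\x}\h(\v^{(i)},\e)\cdot\Delta = \h(\v^{(i)},\e)$) costs $O(n^\omega)$, or naively $O(n^3)$, ring operations, each of which is a series operation costing ${\mathcal{M}}(\delta, m)$, giving $O(n^4\,{\mathcal{M}}(\delta, m))$ with the crude bound (or one may refine the inverse incrementally). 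Adding (i) and (ii) and summing the geometric series over the $\lceil\log_2\delta\rceil$ iterations yields $\softO\big((nL + n^4)\,{\mathcal{M}}(\delta, m)\big)$ operations in $\K$. I expect the delicate bookkeeping to be ensuring the per-step Jacobian-inversion cost is charged correctly and that the straight-line-program evaluation truly costs only $O(L)$ series multiplications at the relevant precision; the rest is a routine geometric-sum estimate.
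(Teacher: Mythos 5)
Your proposal follows essentially the same route as the paper: quadratic Newton--Hensel convergence (the paper proves this by the explicit induction you sketch as your alternative, rather than invoking the formal implicit function theorem, but these are equivalent) followed by a per-step cost analysis summed as a geometric series dominated by the final precision. One small bookkeeping slip in your linear-algebra step: you assert $O(n^3)$ ring operations each costing $\mathcal{M}(\delta,m)$ and then conclude $O(n^4\,\mathcal{M}(\delta,m))$, which doesn't multiply out; the paper reaches the $n^4$ factor by citing Leverrier's division-free algorithm, which costs $O(n^4)$ ring operations and also sidesteps the question of performing divisions during Gaussian elimination over $\K[[\e]]/\langle\e\rangle^\delta$ (a non-field ring), a subtlety your sketch glosses over even though the final bound you state is correct.
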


\begin{algorithm}[h] 	 
  \caption{${\sf Lifting}(\h, \balpha, \delta) $}

~\\

  {\bf Input:} a sequence of polynomial $\h = (h_1, \dots, h_n)$ in
  $\K[\x,\e]^n$, with $\x = (x_1, \dots, x_n)$ and $\e = (e_1, \dots,
  e_m)$, a point $\balpha = (\alpha_1, \dots, \alpha_m)$ in
  $\Kbar{}^m$, a positive integer $\delta$, such that 
  \begin{itemize}
    \item $\balpha$ is a root of $\h(\x, 0, \dots, 0)$ 
    \item the Jacobian matrix of $\h(\x, 0, \dots, 0)$ with respect to 
      $\x$ has full rank at $\balpha$ 
    \end{itemize}

  {\bf Output:} the vector approximates $\balpha$ in $\K[[\e]]$ with
  precision $\delta$
 \begin{enumerate}
\item set $\v^{(0)} = \balpha$
\item compute the Jacobian matrix $\jac$ of $\h$ with respect to $\x$
\item for $k$ from $1$ to $\lceil \log_2(\delta) \rceil$ do: 
 \begin{enumerate}
\item compute $$\v^{(k)} = \left(\begin{matrix}v^{(k-1)}_1 \\ \vdots \\
      v^{(k-1)}_n \end{matrix}\right) - \left(\jac(\v^{(k-1)}, \e
    )\right)^{-1} \, \left(\begin{matrix}h_1(\v^{(k-1)}, \e)
      \\ \vdots \\ h_n(\v^{(k-1)},\e) \end{matrix}\right)$$
\end{enumerate}
\item return $\v^{(\lceil \log_2(\delta) \rceil)}$
\end{enumerate}
  \label{alg:lift} 	 
\end{algorithm} 
\begin{example}
    Let us consider polynomials $h_1 = x_1+x_2-e_1-2$ and $h_2 =
    x_1^2+x_2^2 -e_2-10$ in $\K[x_1, x_2, e_1, e_2]$. The point
    $(-1,3)$ is a root of $h_1(x_1,x_2,0,0) = h_2(x_1,x_2,0,0) =
    0$ and the Jacobian matrix of $(h_1(x_1, x_2, 0, 0), h_2(x_1, x_2,
    0, 0))$  with respect to $(x_1, x_2)$ 
    has full rank at $(-1,3)$.    
     The power series $$v_1 = -1 + \frac{3}{4}e_1 -\frac{1}{8}e_2 +
    \frac{5}{64}e_1^2 - \frac{1}{64}e_1e_2 + \frac{1}{256}e_2^2 +
    \langle e_1, e_2\rangle^2$$ and $$v_2 = 3+ \frac{1}{4}e_1 +
    \frac{1}{8}e_2 - \frac{5}{64}e_1^2 + \frac{1}{64}e_1e_2 -
    \frac{1}{256}e_2^2 +  \langle e_1, e_2\rangle^2$$ in $\K[[e_1,
    e_2]]$ satisfy $v_1(0,0) = 1$, $v_2(0,0) = 3$, and
    $h_1(v_1,v_2,e_1,e_2) = h_2(v_1,v_2,e_1,e_2) = 0$.   
\end{example}

    The rest of this section is devoted to prove
    Proposition~\ref{prop:newton}. Let us denote by $\jac$ the
    Jacobian matrix of $\h$ with respect to 
   $\x$. We first prove that the sequence $\left(\v^{(k)}\right)_{k \ge 0}$
    is well-defined. To do it, we prove the following claims. 
    \begin{lemma} For any integer $k \in \ZZ_{\ge 0}$, 
      \begin{itemize}
        \item[$(${\bf a}$)$] the determinant $\jac$ at
          $\v^{(k)}$ is invertible in $\K[\e]/\langle \e \rangle $ and  
          \item[$(${\bf b}$)$] $h_i(\v^{(k)}, \e) = 0$ mod $\langle \e
            \rangle^{2^k}$ for all $i=1, \dots,n$. 
        \end{itemize}  \label{lemma:convergent}    
      \end{lemma}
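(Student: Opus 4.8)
The plan is to prove Lemma~\ref{lemma:convergent} by induction on $k$, treating statements (\textbf{a}) and (\textbf{b}) simultaneously since they reinforce one another at each step. For the base case $k=0$ we have $\v^{(0)} = \balpha$, so modulo $\langle\e\rangle$ the point $\v^{(0)}$ reduces to $\balpha\in\Kbar^n$; statement (\textbf{a}) is then exactly the full-rank hypothesis on the Jacobian of $\h(\x,0,\dots,0)$ at $\balpha$ (a full-rank square matrix over a field has invertible determinant), and statement (\textbf{b}) holds since $\langle\e\rangle^{2^0} = \langle\e\rangle^1$ and $h_i(\balpha,0,\dots,0) = 0$ by the assumption that $\balpha$ is a root of $\h(\x,0,\dots,0)$. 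Note that this base case is also what guarantees that the very first Newton step is legitimate, i.e.\ that $\jac(\v^{(0)},\e)$ is invertible in the relevant quotient so that $\v^{(1)}$ is well-defined.

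For the inductive step, assume (\textbf{a}) and (\textbf{b}) hold for $k-1$; I want to deduce them for $k$. First observe that $\v^{(k)} \equiv \v^{(k-1)} \pmod{\langle\e\rangle^{2^{k-1}}}$, because the correction term $(\jac(\v^{(k-1)},\e))^{-1}(h_i(\v^{(k-1)},\e))_i$ has entries in $\langle\e\rangle^{2^{k-1}}$ by the inductive hypothesis (\textbf{b}) — here I must first argue that $\jac(\v^{(k-1)},\e)$ is invertible over the appropriate ring, which follows from inductive hypothesis (\textbf{a}): its determinant is a unit modulo $\langle\e\rangle$, hence (being a polynomial/series with invertible constant term) a unit in the localization, so its inverse matrix exists with entries in $\K[[\e]]$ or in the truncation ring we are working in. In particular $\v^{(k)} \equiv \v^{(k-1)} \equiv \balpha \pmod{\langle\e\rangle}$, which immediately gives (\textbf{a}) for $k$: the determinant of $\jac$ evaluated at $\v^{(k)}$ reduces modulo $\langle\e\rangle$ to its value at $\balpha$, which is nonzero.

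The heart of the argument is establishing (\textbf{b}) for $k$, i.e.\ the quadratic convergence $h_i(\v^{(k)},\e) \equiv 0 \pmod{\langle\e\rangle^{2^k}}$. The standard technique is a first-order Taylor expansion of $\h$ around $\v^{(k-1)}$: writing $\v^{(k)} = \v^{(k-1)} + \bm{\Delta}$ with $\bm\Delta = -(\jac(\v^{(k-1)},\e))^{-1}\h(\v^{(k-1)},\e)$, we have
\[
\h(\v^{(k)},\e) = \h(\v^{(k-1)},\e) + \jac(\v^{(k-1)},\e)\,\bm\Delta + \bm{R},
\]
where $\bm R$ collects the second- and higher-order terms, so each entry of $\bm R$ lies in the ideal generated by the products $\Delta_i\Delta_j$. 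By construction the first two terms cancel exactly, leaving $\h(\v^{(k)},\e) = \bm R$. Since each $\Delta_i \in \langle\e\rangle^{2^{k-1}}$ (inductive hypothesis (\textbf{b}) again, together with the fact that the inverse Jacobian has entries with no negative-degree part), every product $\Delta_i\Delta_j \in \langle\e\rangle^{2^{k-1}}\cdot\langle\e\rangle^{2^{k-1}} = \langle\e\rangle^{2^k}$, and hence $h_i(\v^{(k)},\e) \equiv 0 \pmod{\langle\e\rangle^{2^k}}$ for all $i$. I expect the main technical care needed here is bookkeeping the Taylor remainder precisely — making sure the multivariate Taylor formula for polynomials is invoked correctly and that "higher-order terms are divisible by $\Delta_i\Delta_j$" is justified cleanly (e.g.\ by expanding $\h(\v^{(k-1)}+\bm\Delta,\e)$ as a polynomial in $\bm\Delta$ and isolating the degree-$\geq 2$ part), rather than any deep idea. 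Once Lemma~\ref{lemma:convergent} is in hand, the remaining assertions of Proposition~\ref{prop:newton} follow routinely: the $\v^{(k)}$ stabilize modulo each fixed power of $\langle\e\rangle$ and so converge in $\Kbar[[\e]]$ to a limit $\v$ satisfying \eqref{eq:satis}; uniqueness follows from the invertibility of the Jacobian (a contraction/implicit-function-theorem argument in the formal setting, or a direct degree-by-degree determination of the coefficients of $\v$); and taking $\delta$ steps — more precisely $\lceil\log_2\delta\rceil$ of them — yields $\v \bmod \langle\e\rangle^\delta$. For the complexity claim $\softO((nL+n^4)\mathcal{M}(\delta,m))$, I would account per iteration: evaluating $\h$ and its Jacobian at the current series approximation via the straight-line program (automatic differentiation/Baur--Strassen gives the Jacobian at cost $O(L)$ per row, so $O(nL)$ total, each operation being a series multiplication at precision $\delta$ costing $\mathcal{M}(\delta,m)$), plus inverting the $n\times n$ series matrix $\jac$ at cost $O(n^\omega)$ — or $O(n^4)$ with the bound stated — series operations; multiplying by the number $\lceil\log_2\delta\rceil$ of iterations contributes only the $\softO$ logarithmic factor, and one may also exploit that early iterations work at lower precision to keep the total at $\softO((nL+n^4)\mathcal{M}(\delta,m))$ rather than $\log\delta$ times that.
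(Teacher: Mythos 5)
Your proof is correct and follows essentially the same inductive Newton-iteration argument as the paper: induct on $k$, use part (\textbf{a}) to justify the iteration is well-defined, derive $\v^{(k)}-\v^{(k-1)} \equiv 0 \pmod{\langle\e\rangle^{2^{k-1}}}$ from (\textbf{b}), and then use the first-order Taylor expansion of $\h$ to show the quadratic drop in the error. The one small departure is your treatment of (\textbf{a}): you observe directly that $\v^{(k)} \equiv \balpha \pmod{\langle\e\rangle}$ and hence the determinant reduces to its value at $\balpha$, whereas the paper Taylor-expands the determinant $J$ between $\v^{(k)}$ and $\v^{(k+1)}$; your version is slightly more streamlined but not a genuinely different route.
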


 \begin{proof}
     We prove these claims by induction on $k$. For $k = 0$, since
     $\v^{(k)}$ equals $\balpha$, the claims follow from the
     assumptions saying  that $\balpha$ is a root of $\h(\x, 0, \dots,
     0)$ and  the Jacobian matrix of $\h(\x, 0, \dots, 0)$ with
     respect to $\x$ has full rank at $\balpha$.   

     Let us assume that both $(${\bf a}$)$ and $(${\bf b}$)$ hold for
     $k \ge 0$. We will show that $(${\bf a}$)$ and $(${\bf b}$)$ also
     hold for $k+1$. First we have 
     \begin{equation} \label{eq:jac}
     \v^{(k+1)}-\v^{(k)} = -\jac(\v^{(k)},
     \e)^{-1}\left(\begin{matrix}h_1(\v^{(k)}, \e) 
      \\ \vdots \\ h_n(\v^{(k)},\e) \end{matrix}\right). 
     \end{equation} Together with the induction hypothesis for $k$,
     one can deduce 
     that \begin{equation}\v^{(k+1)}-\v^{(k)} = 0 \ \mod \langle \e
            \rangle^{2^k}. \label{eq:minus}\end{equation}

     Let us denote by $J$ the determinant of $\jac$, the Jacobian
     matrix of $\h$ with respect to $\x$. Then by Taylor expansion, 
\[
 J(\v^{(k+1)}, \e) = J(\v^{(k)}, \e) + \sum_{i=1}^n\frac{\partial
       J}{\partial x_i}(\v^{(k)}, \e) (\v^{(k+1)} - \v^{(k)})  \mod
     \langle \v^{(k+1)} - \v^{(k)}\rangle^2,
\]
    where $\langle \v^{(k+1)} - \v^{(k)}\rangle = \langle
     v_1^{(k+1)} - v_1^{(k)}, \dots,  v_n^{(k+1)} - v_n^{(k)} \rangle$
     an ideal in $\Kbar(\e)$. Moreover we have $J(\v^{(k)}, \e)$ is
     non-zero in $\K[\e]/\langle \e \rangle$ by the induction
     hypothesis for $k$ and $\v^{(k+1)}-\v^{(k)} = 0 \ \mod \langle \e
            \rangle$ by~\eqref{eq:minus}. Therefore $J(\v^{(k+1)}, \e)
            \ne 0 \mod \langle \e \rangle$,  which gives our claim
            $(${\bf a}$)$ for 
            $k+1$. 

     To prove part $(${\bf b}$)$ holds for $k+1$, we first multiply
     both sides of~\eqref{eq:jac} with $\left( \frac{\partial
         h_i}{\partial x_1}, \dots, \frac{\partial
         h_i}{\partial x_n}\right)$, for $i=1, \dots, n$, to have 
     \[
     \left( \frac{\partial
         h_i}{\partial x_1}, \dots, \frac{\partial
         h_i}{\partial x_n}\right) \left(\v^{(k+1)}-\v^{(k)}\right)^T
     = h_i(\v^{(k)}, \e). 
     \] In addition, by Taylor expansion of $h_i$ between the points
     $\v^{(k+1)}$ and $\v^{(k)}$,  one has 
\[
           h_i(\v^{(k+1)}, \e) = h_i(\v^{(k)}, \e) + \sum_{j=1}^n
     \frac{\partial h_i}{\partial x_j}h_i(\v^{(k)},
     \e)(\v^{(k+1)}-\v^{(k)})   \mod 
     \langle \v^{(k+1)} - \v^{(k)}\rangle^2. 
\]
      These two facts imply that $$h_i(\v^{(k+1)}, \e) = 0 \mod
     \langle\v^{(k+1)} - \v^{(k)}\rangle.$$ Then together
     with~\eqref{eq:minus}, one can conclude that $h_i(\v^{(k+1)}, \e)
     = 0 \mod \langle \e \rangle^{2^{k+1}}$, which is our claim
     $(${\bf b}$)$ for $k+1$. 
     \end{proof}
 
  Let us  conclude the existence of the vector of power series
     $\v$ in $\Kbar[[\e]]^n$ which satisfies~\eqref{eq:satis}. As
     in~\eqref{eq:minus}, the equation $\v^{(k+1)}-\v^{(k)} = 0 \ \mod
     \langle \e \rangle^{2^k}$ holds for any $k\in \ZZ_{\ge 0}$. Then
     for any $i=1, \dots, n$, the sequence of functions 
            $\left(v_i^{(k)}\right)_{k\in \ZZ_{\ge 0}}$ converges to a
            power series $v_i$ in $\K[[\e]]$. Furthermore, by
            Lemma~\ref{lemma:convergent}$(${\bf b}$)$, $$h_i(\v^{(k)},
            \e) = 0 \mod \langle \e\rangle^{2^{k}}  {\rm for~} i=1, \dots, n
            {\rm ~and~all~} {k\in \ZZ_{\ge 0}},$$ which implies that $h_i(\v,
            \e) =0$ holds in $\Kbar[[\e]]$ for all $i=1, \dots,
            n$. Finally, the relation in~\eqref{eq:minus} also gives
            us $\v(0, \dots, 0) = \balpha$ .

  We now investigate the complexity to compute
    $\v^{\lceil\log_2(\delta)\rceil}$  from $\balpha$ and $\h$, which
    finishes our proof for  Proposition~\ref{prop:newton}.   
    \begin{lemma} Let $L$ be the complexity to compute $\h = (h_1,
      \dots, h_n)$. Then the
      complexity to compute the approximation $\v^{\lceil\log_2(\delta)\rceil}$ of 
      $\balpha$ in $\Kbar[[\e]]$ with precision $\delta$ is $O((nL
        + n^4){\mathcal{M}}(\delta, m))$ operations in $\K$. 
      \end{lemma}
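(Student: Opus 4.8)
The plan is to analyze the cost of the $\lceil\log_2(\delta)\rceil$ Newton iterations carried out by the {\sf Lifting} algorithm, working at increasing precision and charging each iteration against the series-multiplication cost $\mathcal{M}(\cdot,m)$. The key observation is the standard one for quadratic Newton iteration: at step $k$ we only need to know $\v^{(k-1)}$ modulo $\langle\e\rangle^{2^{k-1}}$, and the update produces $\v^{(k)}$ correct modulo $\langle\e\rangle^{2^k}$ (this is exactly Lemma~\ref{lemma:convergent}(b)). Hence the $k$-th iteration can be performed entirely with power series truncated at precision $2^k$, so its cost is governed by $\mathcal{M}(2^k,m)$. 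Because $\mathcal{M}(D,m)$ is (quasi-)linear in $D$ — more precisely superadditive, so $\sum_{k}\mathcal{M}(2^k,m)=O(\mathcal{M}(2^{\lceil\log_2\delta\rceil},m))=O(\mathcal{M}(\delta,m))$ up to the $\softO$ — the geometric growth of the precisions means the total cost is, up to a constant, the cost of the single last iteration at precision $\delta$. So the whole argument reduces to bounding the cost of one Newton step at precision $\delta$ by $O((nL+n^4)\mathcal{M}(\delta,m))$, and then invoking this telescoping.

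Next I would itemize the work inside one iteration at precision $D\le\delta$. First, evaluating the vector $\h(\v^{(k-1)},\e)=(h_1(\v^{(k-1)},\e),\dots,h_n(\v^{(k-1)},\e))$: since $\h$ is given by a straight-line program of length $L$, we run that program over the power-series ring $\K[[\e]]/\langle\e\rangle^D$, where each of the $L$ elementary operations $+,-,\times$ costs at most one $m$-variate series multiplication at precision $D$, i.e.\ $O(\mathcal{M}(D,m))$; total $O(L\,\mathcal{M}(D,m))$. Second, forming the Jacobian matrix $\jac(\v^{(k-1)},\e)$: differentiating a length-$L$ straight-line program for $\h$ (one output at a time, or all $n$ outputs via reverse-mode / Baur--Strassen) yields a straight-line program of length $O(nL)$ — or $O(L)$ with the Baur--Strassen trick — computing all $n^2$ partial derivatives, each entry again a series at precision $D$; cost $O(nL\,\mathcal{M}(D,m))$. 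Third, inverting the $n\times n$ matrix $\jac(\v^{(k-1)},\e)$ over the series ring and multiplying it against the vector $\h(\v^{(k-1)},\e)$: this is linear algebra over $\K[[\e]]/\langle\e\rangle^D$, costing $O(n^\omega)$ or crudely $O(n^3)$ ring operations — here I would use $O(n^4)$ to be safe, or note the $\jac$ inverse need only be maintained to the current precision — hence $O(n^4\,\mathcal{M}(D,m))$ operations in $\K$. Adding these three contributions gives $O((nL+n^4)\mathcal{M}(D,m))$ for one step, and with $D\le\delta$ throughout, summing over $k=1,\dots,\lceil\log_2\delta\rceil$ and using superadditivity of $\mathcal{M}(\cdot,m)$ gives the claimed $O((nL+n^4)\mathcal{M}(\delta,m))$.

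The main obstacle — or rather the point that needs the most care — is the matrix-inversion step: I must justify that $\jac(\v^{(k-1)},\e)$ really is invertible over the truncated series ring (not merely over $\Kbar(\e)$), which is precisely Lemma~\ref{lemma:convergent}(a): its determinant is a unit modulo $\langle\e\rangle$, hence a unit modulo $\langle\e\rangle^D$, so Cramer's rule or Gaussian elimination over $\K[[\e]]/\langle\e\rangle^D$ goes through with every pivot a unit. One should also be slightly careful that all scalars $\alpha_i$ and intermediate quantities live in an extension $\Kbar$, but the number of $\K$-operations is unaffected since $\Kbar$ is used only as a constant field and the cost model counts base-ring operations; alternatively one works in the finite extension $\K(\balpha)$, absorbing its degree into the $\softO$. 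A minor subtlety in the telescoping is that $\mathcal{M}(D,m)$ as defined (via Kronecker substitution, $\mathcal{M}(D,m)\le\mathcal{M}_b((2D+1)^m)$) is quasi-linear but not exactly linear in $D$; the $\softO$ in the Proposition statement exactly absorbs the resulting logarithmic slack, so the sum $\sum_k\mathcal{M}(2^k,m)$ is $\softO(\mathcal{M}(\delta,m))$ as needed. Finally I note that this lemma, together with the already-established existence and uniqueness of $\v$, completes the proof of Proposition~\ref{prop:newton}.
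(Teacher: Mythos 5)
Your proposal is correct and follows essentially the same decomposition as the paper's proof: evaluating $\h$ and its Jacobian at $\v^{(k)}$ via Baur--Strassen at $O(nL)$ series operations, inverting the $n\times n$ Jacobian at $O(n^4)$ series operations (the paper cites Leverrier's algorithm), working modulo $\langle\e\rangle^{2^{k+1}}$ at step $k$, and summing the geometrically growing precisions to get $O((nL+n^4)\mathcal{M}(\delta,m))$. Your extra remarks (explicit appeal to Lemma~\ref{lemma:convergent}(a) for invertibility over the truncated series ring, superadditivity of $\mathcal{M}(\cdot,m)$ justifying the telescoping, and the $\Kbar$-versus-$\K$ cost-model point) make the argument slightly more careful than the paper's but do not change the route.
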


\begin{proof} The complexity to compute the first partial
        derivatives $\left(\frac{\partial h_i}{\partial x_j}\right)_{1
          \le i,j\le n}$ of $\h$ is $O(nL)$ operations in $\K$ by
        \cite[Theorem~1]{Baur1983complexity} or
        \cite[Lemma~25]{giusti1997lower}. Therefore one needs the same
        cost to compute the Jacobian matrix $\jac$ of $\h$ with
        respect to $\x$. 

        Given the approximation $\v^{(k)}$, we evaluate the complexity
        to compute  $\v^{(k+1)}$. Evaluating the matrix  $\jac$ and the
        vector $\h$ at  $\v^{(k)}$ takes $O(nL)$ operations in
        $\K[[e_1, \dots, e_m]]/\langle \e\rangle^{t^{2^{k+1}}}$ by
        using Baur-Strassen's 
        algorithm~\cite{Baur1983complexity}. Besides, computing the
        inversion of 
        $\jac(\v^{(k)}, \e)$ requires $O(n^4)$
        operations in  $\K[[e_1, \dots, e_m]]/\langle
        \e\rangle^{t^{2^{k+1}}}$ by using, for instance, Leverrier's
        algorithm~\cite{le1840variations}. Finally, the cost of an
        operation in  the
        quotient ring $\K[[e_1, \dots, e_m]]/\langle
        \e\rangle^{t^{2^{k+1}}}$  is $\mathcal{M}(2^{k+1}, m)$ operations
        in $\K$. Therefore, given $\v^{(k)}$, the
        total cost to  compute $\v^{(k+1)}$ is $O((nL +
        n^4){\mathcal{M}}(2^{k+1}, m))$ operations in $\K$. 
        
        Thus the total cost to compute the approximation
        $\v^{\lceil\log_2(\delta)\rceil}$ of  $\balpha$ is $$O((nL +
        n^4)\sum_{k=0}^{\lceil
          \log(\delta)\rceil}{\mathcal{M}}(2^{k+1}, m)) = O((nL 
        + n^4){\mathcal{M}}(\delta, m))$$ operations in $\K$.
\end{proof}


\section{The main algorithm} \label{sec:main}

Let $G$ be a finite reflection group and $(u_1, \dots, u_n)$ be a set
of generators of $\K[\x]^{G}$ with $\x = (x_1, \dots, x_n)$. Let $f$
be a polynomial in $\K[\x]^{G}$ of degree $d$. In this section, we
present our algorithm and its complexity to compute the polynomial
$f_{\rm new}$ in $\K[\e]$, with $\e = (e_1, \dots, e_n)$, such that
$f_{\rm new}(u_1, \dots, u_n) = f$.

Our main idea is to eliminate the variables $\x$ from the system
$\bar \u = (\bar u_1, \dots, \bar u_n) = (u_1-e_1, \dots, u_n - e_n) \in \K[\x, \e]^n$ by using
linear algebra performance, i.e., the Hensel-Newton's lifting to be more
precise.  

Assume there exists a point $\balpha \in \Kbar{}^n$ such that
$\balpha$  is a solution of $\bar \u(\x, 0, \dots, 0)$ and the Jacobian
matrix of $\bmu$ with respect to  $\x$ has full rank at
$\balpha$. Then, by Proposition~\ref{prop:newton},  there exists a
unique vector of power series $\v = (v_1, \dots, v_n)\in \K[[\e]]^n$
such that  
  \begin{equation*}
   \v(0, \dots, 0) = \balpha \quad {\rm and} \quad \bar u_1(\v, \e) =
   \cdots =  \bar u_n(\v, \e) = 0. 
   \end{equation*} In order to find $f_{\rm new}$  we only need
   truncations of $(v_1, \dots, v_n)$ at precision $d$, which can be
   done by using ${\sf Lifting}$ algorithm. We then finally evaluate
   $f$ at these truncated power series to obtain $f_{\rm new}$.

However, for a root $\balpha$ of $\bar \u(\x, 0, \dots, 0)$, the
  Jacobian matrix of $\bar \u(\x, 0, \dots, 0)$ with respect to $\x$ is
  not always full rank at $\balpha$. 
 \begin{example} Let us consider $n=3$ and $G$ be the symmetric group
    $S_3$. In this case, we can take $(u_1, u_2, u_3) = (x_1+x_2+x_3,
    x_1^2+x_2^2+x_3^2, x_1^3+x_2^3+x_3^3)$ the power sum symmetric
    functions of $(x_1, x_2, x_3)$. Then $\bar \u = (x_1+x_2+x_3-e_1,
    x_1^2+x_2^2+x_3^2-e_2, x_1^3+x_2^3+x_3^3-e_3)$ and the Jacobian
    matrix of $\bar \u$ with respect to $(x_1, x_2, x_3)$ is 
    $$\jac = \left( \begin{matrix} 1 & 1 &1 \\ 2x_1 &
        2x_2 &2x_3 \\ 3x_1^2 & 3x_2^2 &
        3x_3^2\end{matrix}\right).$$ The point $\balpha = (0,0,0)$ is a 
    solution of $\bar \u(x_1, x_2, x_3, 0, 0,0)$, however, the rank of Jac
    at $\balpha$ is equal to 1. \label{ex:not_full} 
  \end{example}

  In order to deal with the above
  situation, we take a random point  $\balpha \in \Kbar{}^n$, then we
  compute a new polynomial system $$\u := (u_1-e_1-c_1, \dots,
  u_n - e_n-c_n) \in \K[\x, \e],$$ where $c_i = u_i(\balpha)$
  for $1 \le i \le 
  n$. By this way, the point $\balpha$ is a root of $\u(\x, 0, \dots,
  0)$ and the Jacobian matrix of $\u(\x, 0, \dots, 0)$ with respect
  to $\x$ is full rank at $\balpha$ by Lemm~\ref{lemma:shift} 
  below. 
  \begin{example}\label{ex:full}
    Let us consider $G$ and $(u_1, u_2, u_3)$ as in
    Example~\ref{ex:not_full}. We take a random point $\balpha = (4,
    6, 0)$ in $\Kbar{}^3$. The Jacobian matrix  of $\u  =
    (x_1+x_2+x_3-e_1- 10, x_1^2+x_2^2+x_3^2-e_2 - 52,
    x_1^3+x_2^3+x_3^3-280)$ with respect to $(x_1, x_2, x_3)$ has full
    rank $3$ at  $\balpha$.     
  \end{example}

 Then if  $\v^{(\lceil \log_2(d)\rceil)}$ is the truncation of
  the vector of power series $\v$ and  $\bar f_{\rm new}$, which
  indeed equals to $f_{\rm new}(e_1 -c_1, \dots, e_n-c_n)$, is the
  evaluation of $f$ 
   at $\v^{\lceil \log_2(d)\rceil}$, we then  apply the translation
   $(e_i)_{1 \le i \le n} \leftarrow (e_i + c_i)_{1 \le i \le n}$ in
   order to obtain the polynomial $f_{\rm new}$. The result of this
   process is a so called ${\sf Convert\_Polynomial}$ algorithm. 

\begin{example} Continuing with Example~\ref{ex:full} and considering $f =
  x_1^3+x_2^3+x_3^3-2x_1x_2x_3 - x_1-x_2-x_3$ a polynomial in $\K[x_1,
  x_2, x_3]^{{S}_3}$ of degree $d=3$, the ${\sf
    Lifting}$ procedure takes $\bmu$, 
 $(4,6,0)$, and $d$ as the 
  input, and outputs
\begin{align*}
    v_1^{(2)} &= \frac{-1}{512}e_1e_2e_3 - \frac{11}{96}e_1^3
  +\frac{33}{4096}e_2^3-\frac{1}{55296}e_3^3+ \cdots  + \frac{3}{8} e_2
        - \frac{1}{24} e_3+ 4, \\
    v_2^{(2)} &=  \frac{11}{2592}e_1e_2e_3 + \frac{1}{24}e_1^3
  -\frac{197}{31104}e_2^3=\frac{29}{1679616}e_3^3 + \cdots  
-      \frac{1}{6} e_2 + \frac{1}{36} e_3
      + 6, {\rm ~and}\\
    v_3^{(2)} &=\frac{-95}{41472}e_1e_2e_3 +\frac{7}{96}e_1^3
  -\frac{1715}{995328}e_2^3 +  \frac{11}{13436928}e_3^3+ \cdots   + e_1
      - \frac{5}{24}e_2 + \frac{1}{72}e_3.
\end{align*}
Then we substitute $(x_1, x_2, x_3) = (v_1^{(2)}, v_2^{(2)}, v_3^{(2)})$
into $f$ and truncate the result at degree $3$ to obtain $$\bar{f}_{\rm
  new} = -1/3e_1^3 -10e_1^2 + e_1e_2 -49e_1+10e_2 + 1/3e_3 + 270.$$
Finally $${f}_{\rm new} =  \bar{f}_{\rm new}(e_1-10, e_2-52, e_3-280)
= -1/3e_1^3  + e_1e_2 -e_1+1/3e_3.$$
\end{example}
   
To
   conclude the correctness of our algorithm, we
    need the following result in order to conclude the correctness 
   of the algorithm.  

\begin{lemma} Let $\h = (h_1, \dots, h_n)$ be a sequence of
  polynomials in $\K[\x]$, with $\x = (x_1, \dots, x_n)$, and $\balpha
  = (\alpha_1, \dots, \alpha_n)$ be a  random point in $\Kbar{}^n$. We
  define a system of polynomials $\f = (f_1, \dots, f_n) = (h_1 -
  h_1(\balpha), \dots,  h_n - h_n(\balpha))$ in $\K[\x]$. 
   Then the Jacobian matrix of $\f$ with respect to $\x$ has full rank
   at $\balpha$. \label{lemma:shift} 
\end{lemma}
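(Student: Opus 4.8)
The plan is to observe that passing from $\h$ to $\f$ changes no partial derivative, and then to deduce full rank at $\balpha$ from the fact that the Jacobian determinant of $\h$ is a nonzero polynomial together with the genericity of $\balpha$.

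First I would note that each $f_i = h_i - h_i(\balpha)$ differs from $h_i$ only by the constant $h_i(\balpha) \in \Kbar$, so $\partial f_i/\partial x_j = \partial h_i/\partial x_j$ for all $1 \le i,j \le n$, and therefore the Jacobian matrix $\jac$ of $\f$ with respect to $\x$ coincides, as a matrix over $\K[\x]$, with that of $\h$. Hence it suffices to show that this matrix has rank $n$ at $\balpha$. Set $J := \det \jac \in \K[\x]$. Since $h_1, \dots, h_n$ are algebraically independent over $\K$ — in the application they are the generators $(u_1, \dots, u_n)$ of $\K[\x]^{G}$, for which this is part of the standing hypotheses — the Jacobi criterion for algebraic independence guarantees that $J$ is not the zero polynomial. (This is the only place where algebraic independence is used; in positive characteristic one additionally uses that the relevant field extension is separably generated, which holds in the reflection-group setting when $|G|$ is invertible in $\K$.)

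Finally, since $J$ is a nonzero polynomial, its zero locus $V(J) \subset \Kbar{}^n$ is a proper Zariski-closed subset, so a random point $\balpha$ lies outside it: if the coordinates of $\balpha$ are drawn from a set $S$ with $|S| > \deg(J)$, then by the Schwartz--Zippel lemma $J(\balpha) \ne 0$ with positive probability, and for any such choice we indeed have $J(\balpha) \ne 0$. Then $\jac(\balpha)$ is invertible over $\Kbar$, hence has full rank $n$, and by the first step so does the Jacobian matrix of $\f$ at $\balpha$. The main (in fact the only) obstacle is the non-vanishing $J \not\equiv 0$, which is exactly where the algebraic-independence hypothesis enters; the remaining steps are a one-line derivative computation and the standard genericity argument.
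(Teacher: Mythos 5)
Your proof is correct, and it takes a genuinely different and more elementary route than the paper's. The paper's proof invokes Thom's weak transversality theorem (an algebraic Sard-type argument): it introduces the auxiliary system $\mathfrak{f} = (h_1-z_1, \dots, h_n-z_n)$ with fresh variables $\z$, checks that $\mathbf{0}$ is a regular value of the joint map $(\balpha,\rho)\mapsto\Theta_\rho(\mathfrak{f})(\balpha)$ because the $\z$-block of the Jacobian is $-I_n$, and then applies transversality to obtain a dense open $\mathscr{O}\subset\Kbar{}^n$ of good parameters $\rho$, finally pulling back to $\mathscr{U}:=\h^{-1}(\mathscr{O})$. You bypass all of this machinery by (i) observing the Jacobian of $\f$ equals that of $\h$ since they differ by constants, (ii) using the Jacobi criterion to get $J:=\det\jac\not\equiv 0$, and (iii) applying Schwartz--Zippel genericity.

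Two remarks on the comparison. First, your appeal to algebraic independence is not a hidden extra hypothesis: the lemma is simply false without it (if the $h_i$ are algebraically dependent, $J\equiv 0$ and no $\balpha$ works), and the paper implicitly relies on it too, since $\mathscr{U}=\h^{-1}(\mathscr{O})$ being nonempty (let alone dense) requires $\h$ to be dominant, which is equivalent to algebraic independence of $h_1,\dots,h_n$. You make this dependence explicit where the paper glosses over it. Second, your parenthetical caveat about separability in positive characteristic is well placed: in characteristic $p$, algebraic independence does not by itself guarantee $J\not\equiv 0$ (e.g.\ $h_1=x_1^p$, $h_2=x_2$), and the paper's transversality argument would similarly fail there because $\mathscr{O}$ would then miss the image of $\h$. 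In short, your argument proves the same statement with lighter tools and makes its hypotheses more visible; the paper's proof is heavier but routes around the Jacobi criterion by citing a general transversality theorem.
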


\begin{proof} 
To prove our claim, it suffices to show that there exits a
non-empty Zariski open set $\mathscr{U} \subset \Kbar{}^n$ such that
for any $\balpha \in \mathscr{U}$, the Jacobian matrix of $\f$ with
respect to $\x$ has full rank at $\balpha$. 

Let $\z = (z_1, \dots, z_n)$ be a set of new variables. We set
$\mathfrak{f} =  (h_1-z_1,
\dots, h_n-z_n)$ a polynomial system in $\K[\x, \z]$. For a point
$\rho = (\rho_1, \dots, \rho_n) \in \Kbar{}^n$, we denote by
$\Theta_\rho$ the mapping  
\begin{align*}
  \Theta_\rho :  \K[\z][x_1, \dots, x_n] &\rightarrow \Kbar[x_1,
                                           \dots, x_n] \\ 
   \mathfrak{f}  &\mapsto (h_1-\rho_1, \dots, h_n-\rho_n)
\end{align*}by setting $\z$ equals $\rho$.  Note that, for any $\rho
\in \Kbar{}^n$,  the Jacobian matrix of $\Theta_\rho(\mathfrak{f} )$
with respect to $\x$ equals that of $\f$. 

Consider the mapping 
\begin{align*}
 \Theta :  (\balpha, \rho) \in \Kbar{}^n \times \Kbar{}^n &\rightarrow
                                                            \Theta_\rho(\mathfrak{f})(\balpha).           
\end{align*} Since the columns corresponding to partial derivatives of
$\mathfrak{f}$ with respect to $\z$ contain a $\diag(-1, \dots, -1)$
matrix, then the Jacobian matrix of $\mathfrak{f}$  has full rank at
all points $(\balpha, \rho)$ of its zero-set. In other words, $\bf 0$
is a regular value of $\Theta$. 

Then, by Thom's weak transversality theorem (see
e.g.~\cite[Proposition B.3]{din2017nearly} for the algebraic version),
there exists a non-empty Zariski open set $\mathscr{O} \subset
\Kbar{}^n$  such that for $\rho \in \mathscr{O}$, $\bf 0$ is a 
regular value of the induced mapping
\begin{align*}
     \balpha \in \Kbar{}^n  \rightarrow
  \Theta_\rho(\mathfrak{f})(\balpha). 
\end{align*} That is, the Jacobian matrix of
$\Theta_\rho(\mathfrak{f})$ has rank $n$ at any root $\balpha \in
\Kbar{}^n$ of $\Theta_\rho(\mathfrak{f})$.  

As a consequence, for the dense Zariski open set $\mathscr{U} :=
\h^{-1}(\mathscr{O})$, we have for all points $\balpha \in
\mathscr{U}$, the Jacobian of $\h$ evaluated at $\balpha$
has full rank. In addition, $\balpha$ is a root of $\f$ and the
Jacobian of $\h$ is the same as that of $\f = \h -
\h(\balpha)$. Altogether, we obtain  our claim.
\end{proof}

\begin{algorithm}[h] 	 
  \caption{${\sf Convert\_Polynomial}\left((u_1, \dots,
      u_n), f\right)$} 
~\\

  {\bf Input:}   $n$ algebraically independent elements $(u_1, \dots,
  u_n)$ in $\K[x_1, \dots, x_n]$ and a polynomial $f \in \K[u_1, \dots, u_n]$ of
  degree $d$\\
  
 {\bf Output:} the polynomial $f_{\rm new} \in \K[e_1,
 \dots, e_n]$ 
  such that  $f_{\rm new}(u_1, \dots, u_n) = f$
 \begin{enumerate}
\item take a random point $\balpha = (\alpha_1, \dots, \alpha_n)$ in
  $\Kbar{}^n$  
\item \label{step:c}compute $(c_1, \dots, c_n) = (u_1(\balpha),
  \dots, u_n(\balpha))$ 
\item define polynomials $\u = (u_1-e_1-c_1, \dots,
  u_n-e_n-c_n)$ in $\K[x_1, \dots, x_n, e_1, \dots, e_n]^n$
\item compute $(v_1, \dots, v_n) = {\sf Lifting}(\u, \balpha, d) \in
  \K[[e_1, \dots, e_n]]^n$ \label{step:l}
\item find $f(v_1, \dots, v_n)$ and truncate the result at degree
  $d$ to obtain $\bar f_{\rm new}$ \label{step:e}
\item  return $\bar f_{\rm new}(e_1+c_1, \dots, e_n +c_n)$
\end{enumerate}
  \label{alg:a} 	 
\end{algorithm}  



The correctness of the algorithm is obtained from the
  above discussion. It remains to establish a complexity analysis of
  our algorithm. 

  Evaluating $u_k$ at $\balpha$ takes  $O(L)$ operations in $\K$;
  so the total cost of $O(n L)$ operations in $\K$ to compute 
  $c_1, \dots, c_n$ at Step~\ref{step:c}. At the core of the
  algorithm, that is at Step~\ref{step:l}, we need $$\softO((nL
        + n^4){\mathcal{M}}(d, n))$$ operations in $\K$ to compute the
        truncated power series $(v_1, \dots, v_n)$ by using ${\sf
          Lifting}$ algorithm from Proposition~\ref{prop:newton}. 
 We then evaluate $f$ at these truncated power series at
 Step~\ref{step:e}. Since $f$ has degree $d$, this can be done by
 using  $O\big(\binom{n+d}{n}\big)$ operations on $n$-variate power series
 truncated in $d$, for a total of $O\big(\binom{n+d}{n}^2\big)$
 in $\K$. This step gives  $$\bar f_{\rm new} = f_{\rm new}(e_1
 -c_1, \dots, e_n-c_n).$$ 

We  finally apply the translation $(e_i)_{1 \le i \le n}
 \leftarrow (e_i + c_i)_{1 \le i \le n}$ in order to obtain the polynomial
 $f_{\rm new}$. To do it, we incrementally compute the translates of all
   monomials of degree up to $d$ through successive multiplications
   and then, before combining, using  the coefficients of $\bar f_{\rm
     new}$. This step requires  $O\big(\binom{n+d}{n}^2\big)$
   operations in $\K$. 
 Therefore, the total cost to compute the polynomial $f_{\rm new}$
 is $${\softO\big((nL  + n^4){\mathcal{M}}(d, n) +
   \binom{n+d}{n}^2\big)}$$  operations in $\K$, as required. 

\bibliographystyle{plain}
\bibliography{biblio} 

\end{document}